\begin{document}





\title{EZ-AG: Structure-free data aggregation in MANETs using push-assisted self-repelling random walks}

\author[1]{V. Kulathumani\thanks{vinod.kulathumani@mail.wvu.edu}}
\author[1]{M. Nakagawa\thanks{manakagawa@mix.wvu.edu}}
\author[2]{A. Arora\thanks{anish.arora@samraksh.com}}
\affil[1]{Department of Computer Science, West Virginia University, Morgantown WV 26505}
\affil[2]{The Samraksh Company, Dublin OH 43017}

\newtheorem{theorem}{Theorem}[section]
\newtheorem{lemma}[theorem]{Lemma}
\newtheorem{proposition}[theorem]{Proposition}
\newtheorem{Corollary}[theorem]{Corollary}
\newtheorem{Definition}[theorem]{Definition}

\maketitle

\begin{abstract}
This paper describes EZ-AG, a structure-free protocol for duplicate insensitive data aggregation in MANETs. The key idea in EZ-AG is to introduce a token that performs a self-repelling random walk in the network and aggregates information from nodes when they are visited for the first time. A self-repelling random walk of a token on a graph is one in which at each step, the token moves to a neighbor that has been visited least often. While self-repelling random walks visit all nodes in the network much faster than plain random walks, they tend to slow down when most of the nodes are already visited. In this paper, we show that a {\em single step push phase at each node} can significantly speed up the aggregation and eliminate this slow down. By doing so, EZ-AG achieves aggregation in only $O(N)$ time and messages. In terms of overhead, EZ-AG outperforms existing structure-free data aggregation by a factor of at least $log(N)$ and achieves the lower bound for aggregation message overhead. We demonstrate the scalability and robustness of EZ-AG using ns-3 simulations in networks ranging from $100$ to $4000$ nodes under different mobility models and node speeds. We also describe a hierarchical extension for EZ-AG that can produce multi-resolution aggregates at each node using only $O(N log N)$ messages, which is a poly-logarithmic factor improvement over existing techniques.
\end{abstract}

\noindent {\bf Keywords:}~~mobile ad-hoc, random walks, scalable and robust data aggregation, multi-resolution synopsis


\section{Introduction}
The focus of this paper is on computing order and duplicate insensitive data aggregates (also referred to as ODI-synopsis) and delivering them to every node in a mobile ad-hoc network (MANET) \cite{cascade_ipsn07, fisheye_gossip, snapshots, synopsis_diffusion}. In an ODI synopsis, the same data can be aggregated multiple times but the result is unaffected. MAX and MIN are natural examples of such duplicate insensitive data aggregation. Other statistical aggregates such as COUNT and AVERAGE can also be implemented with ODI synopsis using probabilistic techniques \cite{synopsis_diffusion, sensor-databases}. We are specifically motivated by data aggregation requirements in extremely large scale mobile sensor networks such as networks of UAVs, military networks and dense vehicular networks, where the number of nodes are often several thousands. 


In static sensor networks and networks with stable links, data aggregation can be performed by routing along fixed structures such as trees or network backbones \cite{naik_sprinkler, tag, ctp, directdiff}. However, in MANETs, routing has proven to be quite challenging beyond scales of a few hundred nodes primarily because topology driven structures are unstable and are likely to incur a high communication overhead for maintenance in the presence of node mobility\cite{rtswjournal}. Therefore, structure-free techniques are more appropriate for data aggregation in MANETs. However, a simple technique like flooding data from each node to every other node is not scalable as it incurs an overall cost of $O(N^2)$, where $N$ is the number of nodes in the network. Therefore, in this paper we explore the use of self-repelling random walks as a structure free method for data aggregation. 

Random walks are appropriate for data aggregation in mobile networks because they are inherently unaffected by node mobility. The idea is to introduce a token in the network that successively visits all nodes in the network using a random walk traversal and computes the overall aggregate. We say that a node is {\em visited} by a token when the node gets exclusive access to the token; the visitation period can be used by the node to add node-specific information into the token, resulting in data aggregation.  Note that the concept of visiting all nodes individually differs from that of token dissemination \cite{mmgossip, trickle} over the entire network where it suffices for every node to simply hear at least one token, as opposed to getting exclusive access to a token. 

Note, however that traditional random walks may be too slow in visiting all nodes in the network because they may get stuck in regions of already visited nodes. Hence, in this paper we consider self-repelling random walks \cite{self-repelling}. A self-repelling random walk is one in which at each step the walk moves towards one of the neighbors that has been least visited \cite{self-repelling} (with ties broken randomly). Self-repelling random walks were introduced in the $1980s$ and have been studied extensively in the physics literature. One of the striking properties of self-repelling random walks is the remarkable uniformity with which they visit nodes in a graph. This has been studied formally in terms of the variance of the number of visits at each node during a self-repelling random walk. The study in \cite{rw12} shows that this variance is bounded by values less than $1$ even in lattices of dimensions $2048 \times 2048$. The uniformity property of self-repelling random walks is interesting for data aggregation because it implies that the token is likely to spread towards unvisited areas in the network as opposed to getting stuck in already visited regions. 

Indeed, our results in this paper confirm that until about $85\%$ coverage, duplicate visits are very rare with self-repelling random walks highlighting the efficiency with which a majority of nodes in the network can be visited without extra overhead. However, we observe a slow down when going towards $100\%$ coverage. When most of the nodes are already visited, the token executing self-repelling random walk has to explore the graph to find the next unvisited node. As a result, at $100\%$ coverage, we observe that the exploration overhead (ratio of number of token transfers to the number of unique node visited) rises to around $2$ at a network size of $4000$ nodes. Moreover, we observe a small rise in exploration overhead as a function of network size. This trend is logarithmic indicating that the total number of steps to finish visiting all nodes grows as $O(Nlog(N))$ where $N$ is the number of nodes in the network. To correct this shortcoming, we introduce a complementary {\em push} phase that speeds up the convergence of the random walk. The {\em push} phase consists of just one message from each node: before the random walk is started, each node announces its own state to all its neighbors. Thus, after the push phase, each node now carries information about all its neighbors. As a result, when the random walk executes, it does not have to visit all nodes to finish the aggregation. In fact, we show that the aggregation can finish {\em before} the slow down starts for the self-repelling random walk. As a result both the aggregation time and number of messages are now bounded by $O(N)$.

\subsection{Summary of contributions}
\begin{itemize}[leftmargin=*]

\item We introduce a novel structure-free technique for data aggregation in MANETs that exploits properties of self-repelling random walks and complements it with a push phase. We find that {\em a little push goes a long way} in speeding up aggregation and reducing message overhead. In fact, the push phase consists of just a single message from each node to its neighbors. By adding this push phase, we show that both the aggregation time and number of messages are bounded in EZ-AG by $O(N)$. In fact, we show that aggregation is completed in significantly less than $N$ token transfers. The protocol is thus extremely simple, requires very little state maintenance (each nodes only remembers the number of times it has been visited), requires no network structures or clustering. 

\item We compare our results with structure-free techniques for ODI data aggregation such as gossiping and show a $log(N)$ factor improvement in messages compared to existing gossip based techniques. We evaluate our protocol using simulations in ns-3 on networks ranging from $100$ to $4000$ nodes under various mobility models and node speeds. We also evaluate and compare our protocol with a prototype tree-based technique for data aggregation (i.e., structure based) and show that our protocol is better suited for MANETs and remains scalable under high mobility. In fact, {\em the performance of EZ-AG improves as node mobility increases.} 

\item Finally, we also provide an extension to EZ-AG which supplies multi-resolution aggregates to each node. In networks that are quite large, providing each node with only a single aggregate may not be sufficient. Hierarchical EZ-AG addresses this issue by providing each node with multiple aggregates of neighborhoods of increasing size around itself using only $O(N log N)$ messages. Moreover, we also show that aggregates of nearby regions can be obtained at a progressively faster rate than farther regions. Hierarchical EZ-AG outperforms existing techniques for multi-resolution data aggregation by a factor of $log^{4.4}N$.

\end{itemize}

\subsection{Outline of the paper}

In Section~\ref{sec:related}, we describe related work and specifically compare our contributions with existing work in structure-based protocols, structure-free protocols and random walks. In Section~\ref{sec:model}, we state the system model. In Section~\ref{sec:protocol}, we describe the EZ-AG protocol.  In Section~\ref{sec:analysis}, we analytically characterize the bound on messages and time for EZ-AG. In Section~\ref{sec:hierarchy}, we describe a hierarchical extension for EZ-AG. In Section~\ref{sec:eval}, we describe the results of our evaluation using ns-3 and compare EZ-AG with a  prototype tree-based protocol for data aggregation. We conclude in Section~\ref{sec:conclusion}.

\section{Related work}
\label{sec:related}

\subsection{Structure-based protocols}
The problem of data aggregation and one-shot querying has been well studied in the context of static sensor networks. It has been shown that in-network aggregation techniques using spanning trees and network backbones are efficient and reliable solutions for the problem \cite{naik_sprinkler, tag, ctp, directdiff}. However, in the context of a mobile network, such fixed routing structures are likely to be unstable and could potentially incur a high communication overhead for maintenance \cite{rtswjournal}.  In this paper, we have systematically compared EZ-AG with a prototype tree-based technique for data aggregation and have shown that it outperforms the tree-based idea in mobile networks. We notice that the improvement gets progressively more significant as the average node speed increases.


\subsection{Structure-free protocols}
Flooding, neighborhood gossip and spatial gossip are three structure-free techniques that can be used for data aggregation. Note that flooding data from all nodes to every other node has a messaging cost of $O(N^2)$. Alternatively, one could use multiple rounds of neighborhood gossip where in each round a node averages the current state of all its neighbors and this procedure is repeated until convergence \cite{gossip-survey, randgossip}. However, this method requires several iterations and has also been shown to have a communication cost and completion time of $O(N^2)$ for convergence in grids or random geometric graphs, where connectivity is based on locality \cite{rabbatgossip}. 

In \cite{fisheye_gossip} and \cite{cascade_ipsn07}, a spatial gossip technique is described where each node chooses another node in the network (not just neighbors) at random and gossips its state. When this is repeated $O(log^{1+\epsilon}N)$ times, all nodes in the network learn about the aggregate state. Note that this scheme requires $O(N.polylog(N))$ messages. Our random walk based protocol, EZ-AG, requires only $O(N)$ messages. Note also that while all this prior work is on static networks, we demonstrate our results on MANETs.

\subsection{Random walks}
Random walks and their cover times (time taken to visit all nodes) have been studied extensively for different types of {\em static} graphs \cite{rw1, rw7}. In this paper, we are specifically interested in time varying graphs that are relevant in the context of mobile networks.

Self-avoiding and self-repelling random walks are variants of random walks which bias the walk towards unvisited nodes \cite{self-repelling}. The unformity in coverage of such random walks in 2-d lattices has been pointed out in \cite{rw12}. Our paper extends the analysis of self-repelling random walks presented in \cite{rw12} for application in MANETs that are modeled as time varying random geometric graphs. Further, we show that by complementing self-repelling random walks with a push phase, we can complete aggergation in $O(N)$ time and messages. The idea of locally biasing random walks and its impact in speeding up coverage has been pointed out in \cite{rw8} for static networks. Self-repelling random walks are different than the local bias technique presented in \cite{rw8}. Moreover, we show how to improve the convergence of self-repelling random walks using a complementary push-phase and demonstrate our results on mobile networks. 

In a recent paper \cite{census-arxiv}, we have addressed the problem of duplicate-sensitive aggregation using self-repelling random walks and in that solution we have used a gradient technique to speed up self-repelling random walks. The short temporary gradients introduced in \cite{census-arxiv} are used to {\em pull} the token towards unvisited nodes so that each node is visited at least once. The solution in \cite{census-arxiv} requires $O(Nlog(N))$ messages. In this paper, we address duplicate insensitive aggregation and show that it can be achieved using self-repelling random walks with just $O(N)$ messages.

\section{Model}
\label{sec:model}

\subsection{Network model}

We consider a mobile network of $N$ nodes modeled as a geometric Markovian evolving graph \cite{clementi1}. Each node has a communication range $R$. We assume that the $N$ nodes are independently and  uniformly deployed over a square region of sides $\sqrt{A}$ resulting in a network density $\rho = N/A$ of the deployed nodes. Consider the region to be divided into square cells of sides $R/\sqrt{2}$. Thus the diagonal of each such cell is the communication range $R$. Let $R^2 > 2clog(N)/\rho$. It has been shown that there exists a constant $c>1$ such that each such cell has $\theta(log N)$ nodes whp, i.e., the degree of each node is $\theta(log N)$ whp. Such graphs are referred to as geo-dense geometric graphs \cite{rw8}. Denote $d = \theta(log N)$ as the degree of connectivity.

The objective of the protocol is to compute a duplicate insensitive aggregate of the state of nodes in a MANET. The aggregate could be initiated by any of the nodes in the MANET or by a special static node such as a base station that is connected to the rest of the nodes. The aggregate needs to be disseminated to all nodes in the network. The protocol could be invoked in a one-shot or periodic aggregation mode. 

\subsection{Mobility model}

We consider $3$ different mobility models for our evaluations. 

\begin{itemize}[leftmargin=*]

\item The first is a random direction mobility model (with reflection) \cite{rw10, rw11} for the nodes. This is a special case of the random walk mobility model \cite{mobilitymodels-survey}. In this mobility model, at each interval a node picks a random direction uniformly in the range $[0,2\pi]$ and moves with a constant speed that is randomly chosen in the range $[v_l,v_h]$. At the end of each interval, a new direction and speed are calculated. If the node hits a boundary, the direction is reversed. Motion of the nodes is independent of each other. An important characteristic of this mobility model is that it preserves the uniformity of node distribution: given that at time $t = 0$ the position and orientation of users are independent and uniform, they remain uniformly distributed for all times $t > 0$ provided the users move independently of each other \cite{rw11, clementi1}. 

\item The second is random waypoint mobility model. Here, each mobile node randomly selects one location in the simulation area and then travels towards this destination with constant velocity chosen randomly from $[v_l,v_h]$ \cite{mobilitymodels-survey}. Upon reaching the destination, the node stops for a duration defined by the {\em pause time}. After this duration, it again chooses another random destination and the process is repeated. We set the pause time to $2$ seconds between successive changes. 

\item The third is Gauss Markov mobility model. In this model, the velocity of mobile
node is assumed to be correlated over time and modeled as a Gauss-Markov stochastic process \cite{mobilitymodels-survey}. We set the temporal dependence parameter $\alpha=0.75$. Velocity and direction are changed every $1$ second in the Gauss Markov Model. 

\end{itemize}

We consider node speeds in the range of $3$ to $21$ m/s. For the deployment density that we have chosen, a mapping between node speed and the average link changes per node per second is listed in Table~\ref{tab1}. This table quantifies the link instability caused by node mobility at different node speeds. As seen in Table~\ref{tab1}, because of high network density, the network structure is rapidly changing at the speeds chosen for evaluation. 

\begin{table}[ht]
\caption{Mapping between speed and link changes per node per second} 
\centering 
\begin{tabular}{c c c c c} 
\hline\hline 
Size & 3m/s & 9m/s & 15m/s & 21m/s \\ [0.5ex] 
\hline 
100 & 1 & 5 & 7 & 9 \\ 
200 & 2 & 6 & 9 & 12 \\
300 & 2 & 7 & 10 & 14 \\
500 & 3 & 8 & 12 & 16 \\
1000 & 3 & 9 & 14 & 18 \\ 
2000 & 3.8 & 10 & 16 & 20 \\
4000 & 4 & 12 & 18 & 23 \\ [1ex] 
\hline 
\end{tabular}
\label{tab1} 
\end{table}


\subsection{Metrics}

A key metric that we are interested in is the number of times the token is transferred to already visited nodes. We present this in the form of {\em exploration overhead} which is defined as the ratio of the number of token transfers to the number of unique nodes whose data has been aggregated into the token. We compute exploration overhead at different stages of coverage as the random walk progresses.

Typically, random walks are evaluated in terms of their {\em cover times}, which is defined as the time required to visit all nodes. For a standard random walk, the notion of physical time, messages and the number of steps are all equivalent. However, for the push assisted self-repelling random walks these are somewhat different. The total number of messages required to complete the data aggregation includes the push messages, the messages involved in the self-repelling random walk and the messages involved in disseminating the result to all the nodes using a flood. Moreover, each token transfer step itself consists of announcement, token request and token transfer messages. Thus, although proportional, the number of messages is different than the number of token transfer steps. Hence we separately characterize the number of messages during empirical evaluation. 

Finally we note that since we study random walks on mobile networks, the notion of time is also related to node speed. Moreover, when dealing with wireless networks, time also involves messaging delays. Therefore, during empirical evaluation we separately characterize the actual convergence time (in seconds) along with the number of steps (i.e., number of token transfers).

\section{Protocol}
\label{sec:protocol}

The node requesting the aggregate first initiates a flood in the network to notify all nodes about the interest in the aggregate. Note that each node broadcasts this flood message exactly once.  This results in $N$ messages. 

Once a node receives this request, it {\em pushes} its state to its neighbors. Each node uses the data received from its neighbors to compute an aggregate of the state of all its neighbors. Note that the push phase also requires exactly $N$ messages.

Soon after the initiator sends out an aggregate request, it also initiates a token to perform a  self-repelling random walk. A node that has the token broadcasts an {\em announce} message. Nodes that receive the announce message reply back with a token {\em request} message and include the number of times they have been visited by the token in this request. The node that holds the token selects the requesting node which has been visited least number of times (with ties broken randomly) and transfers the token to that node. This token transfer is repeated successively. Note that nodes which hear a token announcement schedule a token request at a random time $t_r$ within a bounded interval, where $t_r$ is proportional to the number of times that they have been visited. Thus nodes that have not been visited or visited fewer times send a request message earlier. When a node hears a request from a node that has been visited fewer or same number of times, it suppresses its request. Thus, the number of requests received for a token announcement remains fairly constant and irrespective of network density.

In the following section, we prove analytically that the aggregate can be computed from all nodes in the network whp in $O(N)$ token transfers. In the empirical evaluation, we show that the median number of token transfers is actually only $kn$, where $0 < k < 1$, and $k$ is unaffected by network size. Thus, the median exploration overhead is less than $1$. One can use this observation to terminate the self-repelling random walk after exactly $N$ steps and whp one can expect that data from all the nodes has been aggregated.


Once the aggregate has been computed, the result can simply be flooded back to all the nodes. This requires $O(N)$ messages. Another potential solution (when aggregate is only required at a base station) is to transmit the aggregated tokens using a long distance transmission link (such as cellular or satellite links) in hybrid MANETs where the {\em long links} are used for infrequent, high priority data.  

The protocol is thus extremely simple, requires very little state maintenance (each node only remembers the number of times it has been visited), requires no network structures or clustering.

\section{Analysis}
\label{sec:analysis}

In this section, we first show that the aggregation time and message overhead for push assisted self-repelling random walks is $O(N)$. We consider a static network for our analysis. In section~\ref{sec:eval}, we evaluate the protocol under different mobility models and verify that the results hold even in the presence of mobility.

First, we state the following claim regarding the uniformity in the distribution of visited nodes during the progression of a self-repelling random walk.

\begin{proposition}
The distribution of visited nodes (and unvisited nodes) remains spatially uniform during the progression of a self-repelling random walk.
\end{proposition}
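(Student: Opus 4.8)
The plan is to establish the proposition by combining the low-variance visit property of self-repelling walks (the bound of less than $1$ on the per-node visit variance reported in \cite{rw12}) with the geo-dense cell decomposition introduced in Section~\ref{sec:model}. First I would partition the deployment region into the same cells of side $R/\sqrt{2}$, each containing $\theta(\log N)$ nodes whp, and reformulate ``spatial uniformity'' as the statement that, at every stage of the walk, the number of distinct visited (resp.\ unvisited) nodes in each cell stays concentrated around its global mean, so that no cell is systematically over- or under-explored relative to the others.

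Second, I would argue uniformity as an \emph{invariant} maintained across token-transfer steps, by induction on the number of steps and exploiting the statistical homogeneity induced by the uniform deployment and the constant expected degree $d = \theta(\log N)$. The key observation is that the self-repelling rule---always move to a least-visited neighbor---depends only on local visit counts, not on absolute position. Since every interior cell is statistically identical (same density $\rho$, same degree distribution), the walk carries no spatial bias: conditioned on the invariant holding up to step $t$, the chance that the next first-visit falls in a given cell is essentially proportional to that cell's current share of unvisited nodes. This self-correcting drift toward less-visited regions is exactly what preserves a uniform spatial distribution of both visited and unvisited nodes.

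Third, I would make the concentration quantitative using the variance bound. Because the number of visits per node has variance below a constant, summing over the $\theta(\log N)$ nodes of a cell and applying a Chebyshev- or Hoeffding-type argument shows that the fraction of visited nodes in each cell deviates from the global fraction by a vanishing amount whp, which is precisely the claimed uniformity; a union bound over the $\theta(N/\log N)$ cells then gives the statement simultaneously for all cells.

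The hardest part will be bridging the gap between the variance result, which most directly constrains the \emph{total} number of visits accumulated by coverage time, and the stronger assertion that the \emph{set of first-visited} nodes is uniform at every intermediate stage---in particular, handling the fact that the walk is a single connected trajectory that, very early, forms a localized blob around the initiator, and controlling boundary cells where the reflecting geometry distorts the local degree. I expect to dispose of these by noting that geo-density renders the initial transient negligible on the $O(N)$ timescale of interest (verified empirically in Section~\ref{sec:eval}) and by treating boundary cells separately with a constant-factor correction that does not affect the asymptotic claim.
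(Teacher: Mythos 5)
First, note that the paper does not actually prove this proposition: it is presented only with an informal ``Argument'' that cites the empirically observed variance bound of \cite{rw12} (per-node visit-count variance below $1$ on large lattices) and defers to the simulation evidence of Section~\ref{sec:uniformity}. Your proposal is considerably more ambitious, and in trying to turn that heuristic into a real proof it introduces steps that do not hold up.

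Two steps in particular would fail as written. (i) The inductive step asserts that, conditioned on uniformity holding up to step $t$, the next first-visit lands in a given cell with probability essentially proportional to that cell's current share of unvisited nodes. That is false for a single walk trajectory: the token occupies one position, so the next first-visit necessarily occurs in the cell currently containing the token or an adjacent one, not spread over all $\theta(N/\log N)$ cells. Spatial uniformity, if it holds, is a statement about behavior aggregated over many steps, and the self-repelling rule gives you no per-step mixing property to induct on; this is exactly the ``localized blob'' difficulty you flag at the end, and deferring it to the empirical section means the induction never actually closes. (ii) The concentration step applies Chebyshev/Hoeffding to the sum of visit counts over the $\theta(\log N)$ nodes of a cell using the per-node variance bound. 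But visits to nodes in the same cell by a single trajectory are strongly positively correlated (the walk sweeps a neighborhood in bursts), so the variance of the cell sum is not the sum of per-node variances; and even granting a cell-sum variance of $O(\log N)$, Chebyshev yields only a polynomially small tail per cell, which cannot survive a union bound over $\theta(N/\log N)$ cells. Exponential concentration would require an independence or martingale structure that has not been established. Finally, the foundation itself---the variance bound of \cite{rw12}---is an empirical observation on lattices, not a theorem, so even a correct chain of deductions built on it yields a heuristic, which is all the paper claims. If you want a defensible write-up, the honest route is the paper's: reformulate uniformity over cells if you like, cite the empirical variance result, and point to the measured visit-count distributions (Fig.~\ref{fig:1-1}) as evidence, without presenting the induction and concentration steps as if they complete a proof.
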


\noindent {\em Argument:}~~Our claim is based on the analysis of uniformity in coverage of self-repelling random walks in \cite{rw12} and in \cite{srrw-short-arxiv}. In \cite{rw12}, the variance in the number of visits per node of {\em self-repelling} random walks is shown to be tightly bounded, resulting in a uniform distribution of visited nodes across the network. More precisely, let $n_i(t,x)$ be the number of times a node $i$ has been visited, starting from a node $x$. The quantity studied in \cite{rw12} is the variance $(1/N)(\sum_i (n_i(t,x) - \mu)^2)$, where $\mu = (1/N)(\sum_i  n_i(t,x))$. It is seen that this variance is bounded by values less than $1$ even in lattices of dimensions $2048 \times 2048$. A detailed extension of this analysis for {\em mobile networks} is presented in Section~\ref{sec:uniformity} which shows the uniformity with which nodes are visited during a self-repelling random walk. We use this to infer that even after the walk started, the distribution of visited nodes (and by that token, unvisited nodes) remains uniform. The result shows that the self-repelling random walk is not stuck in regions of already visited nodes - instead, it spreads towards unvisited areas. \rlap{$\qquad \Box$} 

\begin{theorem}
The required number of messages for data aggregation by EZ-AG in a connected, static network of $N$ nodes with uniform distribution of node locations is $O(N)$. \label{thm1}
\end{theorem}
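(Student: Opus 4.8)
The plan is to split the total message count into the four phases of EZ-AG and bound each by $O(N)$. The initial interest flood is one broadcast per node, i.e. $N$ messages; the push phase is likewise one broadcast per node, another $N$ messages; and the final result flood is $O(N)$ by the same reasoning. This reduces the theorem to the cost of the self-repelling random walk. Since each token-transfer step consists of a single \emph{announce} broadcast, a set of \emph{request} replies, and one token \emph{transfer}, and since the request-suppression rule keeps the number of replies per announcement bounded independently of network density, each step costs $O(1)$ messages. Hence it suffices to show that the walk halts after $O(N)$ token transfers.

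The decisive structural observation is the effect of the push phase: once every node has broadcast its state to its neighbors, a node's data is captured by the token as soon as the token reaches \emph{any} neighbor of that node. Therefore aggregation is complete not when every node has been visited, but as soon as the visited set $V$ forms a \emph{dominating set} of the graph. So the first real step is to recast ``the walk finishes'' as ``$V$ becomes dominating,'' which lets the walk terminate well before full coverage, i.e. before the slow-down regime described in the introduction.

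Next I would exploit the cell decomposition from the network model. The region is partitioned into cells of side $R/\sqrt{2}$, whose diagonal equals the communication range $R$, so any two nodes sharing a cell are neighbors. Consequently, once every cell contains at least one visited node, $V$ is dominating and the walk halts. Because $R^2 > 2c\log(N)/\rho$, the number of cells is $\Theta(N/\log N)$, each holding $\theta(\log N)$ nodes. The task becomes: bound the number of steps for the visited set to intersect all $m = \Theta(N/\log N)$ cells. Invoking the uniformity proposition established above, the visited nodes are deposited essentially evenly across cells, so the cell-covering process is dominated by a coupon-collector process over $m$ equiprobable cells; collecting all $m$ of them takes $\Theta(m\log m) = \Theta\!\big((N/\log N)\cdot\log N\big) = \Theta(N)$ hits, one per step. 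Since the low-variance uniformity of the self-repelling walk only accelerates covering relative to this random baseline, $O(N)$ steps suffice whp, and combining with the $O(1)$-per-step accounting and the three $O(N)$ flood/push phases yields the claimed bound.

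The hard part will be the last step: the uniformity proposition is phrased as a bound on the \emph{variance of per-node visit counts}, and converting this essentially static statement into a dynamic guarantee that every cell is hit within $O(N)$ steps is delicate. In particular, one must rule out that uniformity of aggregate visit counts still leaves a handful of cells starved for an anomalously long time---exactly the near-complete-coverage regime where the raw self-repelling walk is known to stall. The argument must therefore make explicit that dominating-set coverage is attained strictly \emph{before} that slow-down sets in, so that the coupon-collector comparison is valid and the $O(N)$ step bound genuinely holds with high probability.
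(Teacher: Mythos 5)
Your proposal matches the paper's proof essentially step for step: the same four-phase message accounting, the same $R/\sqrt{2}$ cell decomposition showing that after the push phase the token need only reach one node per cell, and the same coupon-collector bound of $O\bigl((N/\log N)\log(N/\log N)\bigr)=O(N)$ token transfers, with the uniformity proposition supplying the equiprobable-cells assumption. Your closing caveat about converting the visit-count variance bound into a dynamic cell-coverage guarantee is a gap the paper's own proof also leaves informal, so there is nothing further to reconcile.
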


\begin{proof}
We note that the aggregation request flood and the result dissemination flood require $O(N)$ messages. During the push phase, each node broadcasts its state once and this also requires only $N$ messages. Now, we analyze the self-repelling random walk phase.

\begin{figure}[t]
  \begin{center}
    \includegraphics[width=.5\textwidth]{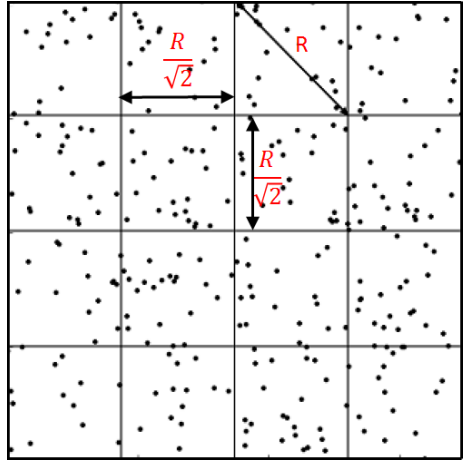}
    \caption{Proof synopsis: Consider the region divided into square cells with diagonal size $R$. At the end of single step push phase, each node has information about all nodes in its cell. So it is sufficient for the token (performing a self-repelling random walk) to visit one node in each cell to finish aggregation. }
    \label{fig:1}
  \end{center}
  \vspace{-3mm}
\end{figure}

Consider the region to be divided into square cells of sides $R/\sqrt{2}$ (see Fig~\ref{fig:1}). Thus the diagonal of each such cell is the communication range $R$. Recall from our system model that each such cell has $\theta(log N)$ nodes whp at all times and there are $O(N/log(N))$ such cells. Therefore, at the end of the push phase, each node has aggregated information about its $\theta(log N)$ cell neighbors. Also note that the network can be divided into $\theta(N/log(N))$ sets of nodes that each contain information about $\theta(log(N))$ nodes within their cell. Therefore, the self-repelling random walk has to visit at least one node in each cell to finish aggregating information from all nodes. 

To analyze the number of token transfers required to visit at least one node in each cell, we use the analogous coupon collector problem (also known as the double dixie cup problem) which studies the expected number of coupons to be drawn from $B$ categories so that at least $1$ coupon is drawn from each category \cite{coupon1}. To ensure that at least $1$ coupon is drawn from each category whp, the required number of draws is $O(B log B)$. Using this result and the fact that a self-repelling random walk traverses a network uniformly, we infer that $O((N/log N)*log(N/logN))$ token transfers are needed to visit at least $1$ node in each of the $\theta(N/log N)$ cells. 

Note that $log(N) > log(N/log(N))$. Hence, the required number of messages for the push assisted self-repelling random walk based aggregation protocol is $O(N/log(N)*log(N))$, i.e., $O(N)$.
\end{proof}

Note that in the presence of mobility, the node locations with respect to cells may not be preserved during the push phase. Therefore the generation of $\theta(N/log(N))$ identical partitions of network state as described in the above analysis may not exactly hold. However, in section~\ref{sec:eval} we empirically ascertain that $kN$ token transfers (where $k < 1$) are still sufficient to aggregate data from all nodes even in the presence of mobility. In fact, we observe that {\em the required token transfers actually decrease with increasing speed}, indicating that data aggregation using self-repelling random walks is {\em actually helped by mobility}.

It follows from the above result that the total time for aggregation is also $O(N)$. The impact of network effects such as collisions on the message overhead and aggregation time (if any) will be evaluated in Section~\ref{sec:eval}.

\section{Extension for hierarchical aggregation}
\label{sec:hierarchy}

When a network is quite large, providing each node with only a single aggregate for the entire network may not be sufficient. It may be more useful to provide each node with a distance-sensitive multi-resolution aggregate, i.e., to provide aggregate information about neighborhoods of increasing sizes around itself. In this section, we describe how EZ-AG can be extended to provide such multi-resolution synopsis of nodes in a network with only $O(N log N)$ messages. 

Existing techniques for such hierarchical aggregation require $O(N log^{5.4}N)$ messages \cite{cascade_ipsn07}. Thus, EZ-AG offers a poly-logarithmic factor improvement in terms of number of messages for hierarchical aggregation. Moreover, EZ-AG can also be used to generate hierarchical aggregates that are distance-sensitive in refresh rate, where aggregates of nearby regions are supplied at a faster rate than farther neighborhoods. 

\subsection{Description}

We divide the network into square cells at different levels ($0$, $1$, .. $P$) of exponentially increasing sizes (shown in Fig.~\ref{fig:hierarchy}). At the lowest level (level $0$), each cell is of sides $R / \sqrt{2}$. Recall from our system model that each such cell has $\theta(log(N))$ nodes whp. For simplicity, let us denote $\theta(log(N)$ by the symbol $\delta$. Thus, there are $N / \delta$ cells at level $0$. Note that $4$ adjoining cells of level $i$ constitute a cell of level $i+1$. Thus, each cell at level $j$ has $\delta 4^j$ nodes whp. At the highest level $P$, there is only one cell with all the $N$ nodes. Note that $P = log_4(N/\delta)$. At any given time, a node belongs to one cell at each level.

\begin{figure}[t]
  \begin{center}
    \includegraphics[width=.5\textwidth]{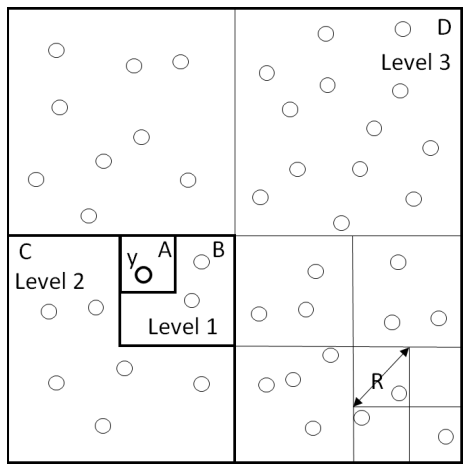}
    \caption{Extension of EZ-Ag to deliver multi-resolution aggregates: The network is partitioned into cells of increasing hierarchy where the cell at smallest level is of diagonal $R$. The node $y$ shown in the figure would receive an aggregate corresponding to one cell at each level that it belongs to. In this case, it would receive aggregates for cells A, B, C and D. The largest cell $D$ consists of the entire network. }
    \label{fig:hierarchy}
  \end{center}
\end{figure}

To deliver multi-resolution aggregates, we introduce a token and execute EZ-AG at each cell at every level. A token for a given cell is only transferred to nodes within that cell and floods its aggregate to nodes within that cell. Thus, there are $N/\delta$ instances of EZ-AG at level $0$ and each instance computes aggregates for $\delta$ nodes, i.e., $\theta(log N)$ nodes.  

The computation and dissemination of aggregates by different instances of EZ-AG are not synchronized. Thus, a node may receive aggregates of different levels at different times. Also, since the nodes are mobile, an aggregate at level $l$ received by a node at any given time corresponds to the cell of the same level $l$ in which it resides at that instant.

\begin{figure*}[htbp]

  \begin{center}
    \mbox{
      \subfigure[Distribution of number of visits at $50\%$ coverage] {\scalebox{0.4}{\includegraphics[width=\textwidth]{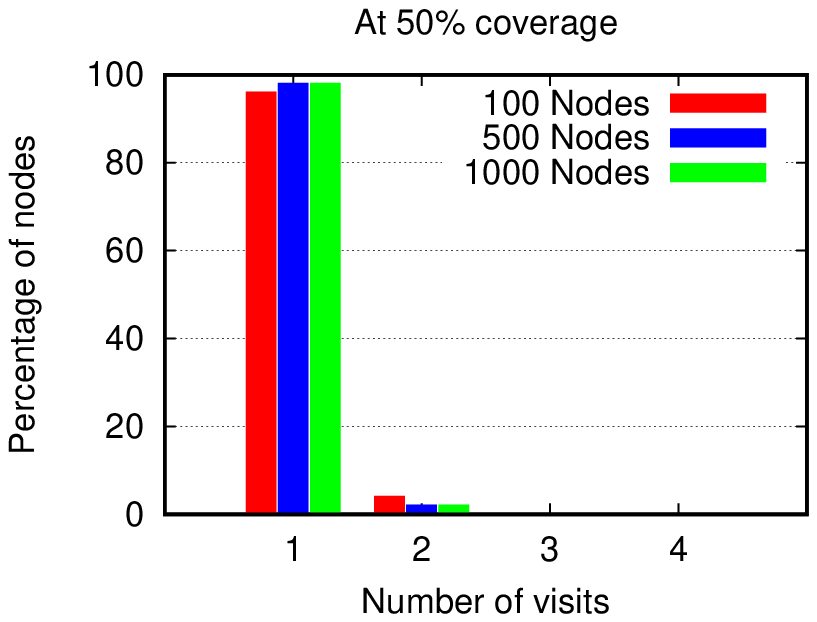}} \label{fig:1a}} \quad
      \subfigure[Distribution of number of visits at $75\%$ coverage] {\scalebox{0.4}{\includegraphics[width=\textwidth]{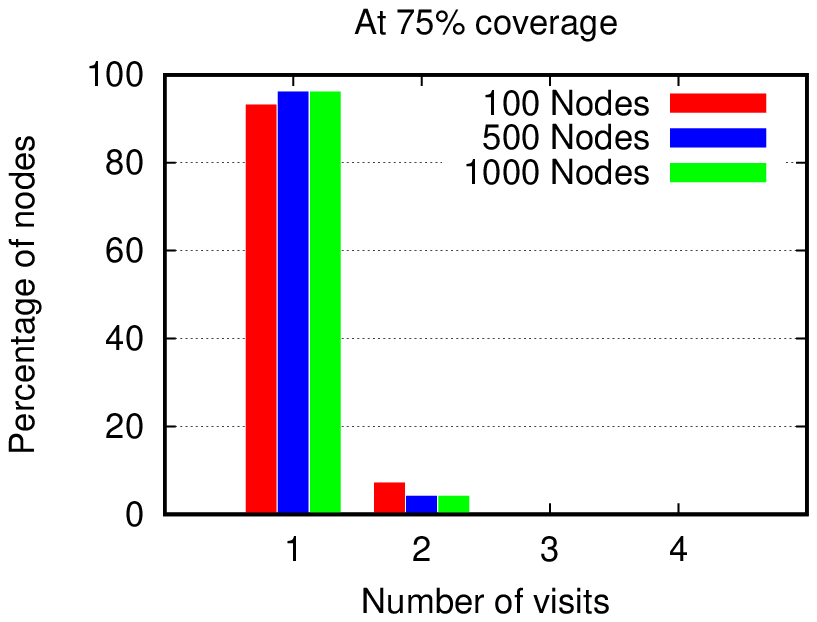}} \label{fig:1b}} 
      } \\
    \mbox{
      \subfigure[Distribution of number of visits at $85\%$ coverage] {\scalebox{0.4}{\includegraphics[width=\textwidth]{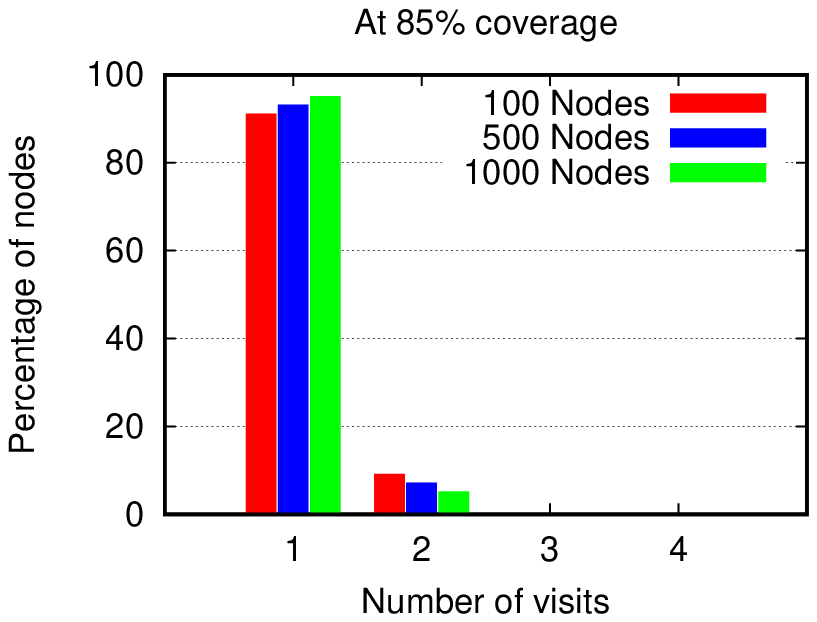}} \label{fig:1c}} \quad
      \subfigure[Distribution of number of visits at $100\%$ coverage] {\scalebox{0.4}{\includegraphics[width=\textwidth]{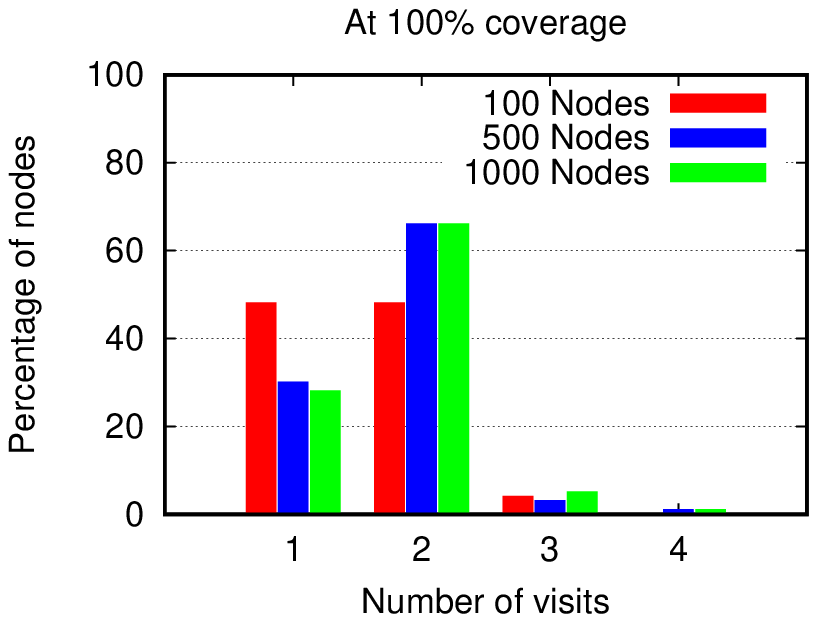}} \label{fig:1d}}     
      } 
    \caption{Distribution of number of visits at each node at different stages of exploration of a self-repelling random walk (network size 100,400 and 1000 nodes)}
       \label{fig:1-1}
  \end{center}
\vspace*{-4mm}
\end{figure*}

\begin{figure*}[htbp]
  \begin{center}
    \mbox{
      \subfigure[Distribution of  number of visits to each node in a pure random walk] {\scalebox{0.4}{\includegraphics[width=\textwidth]{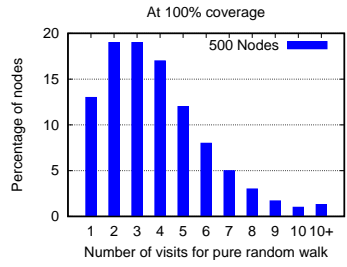}} \label{fig:4a1}} \quad
      \subfigure[Distribution of  number of visits to each node in a self-repelling random walk ] {\scalebox{0.4}{\includegraphics[width=\textwidth]{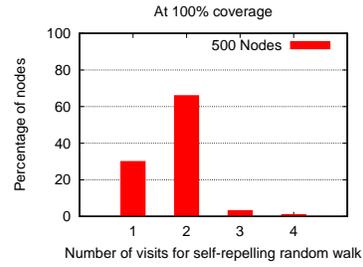}} \label{fig:4b1}} 
          
      } 
    \caption{{Comparison of coverage uniformity with pure random walks}}
       \label{fig:4-1}
  \end{center}
\end{figure*}

\begin{figure*}[htbp]
  \begin{center}
    \mbox{
      \subfigure[Exploration overhead as a function of percentage of nodes visits for self-repelling random walks] {\scalebox{0.4}{\includegraphics[width=\textwidth]{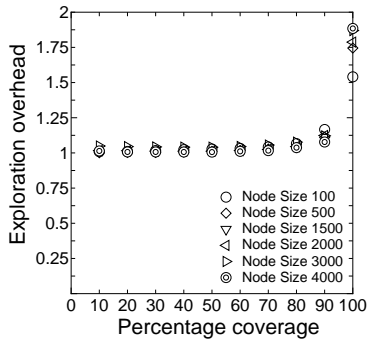}} \label{fig:2a}} \quad
      \subfigure[Exploration overhead at $100\%$ coverage as a function of network size] {\scalebox{0.4}{\includegraphics[width=\textwidth]{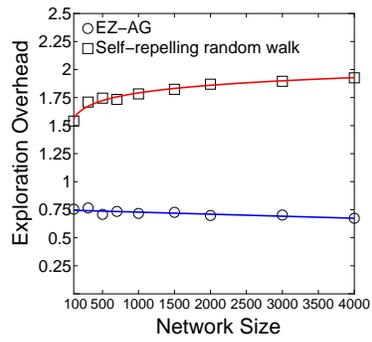}} \label{fig:2b}}         
      } 
    \caption{{Analysis of exploration overhead}}
       \label{fig:2}
  \end{center}
\end{figure*}

\subsection{Analysis}

\begin{theorem}
An ODI aggregate at level $j$ can be computed using hierarchical EZ-AG in $O(4^j \delta)$ time and messages.
\end{theorem}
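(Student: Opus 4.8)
The plan is to reduce the statement to the single-network analysis of Theorem~\ref{thm1}, applied to one cell at level $j$ in isolation. Fix a level-$j$ cell $C$; by construction it contains $M_j = \delta 4^j$ nodes whp and is itself composed of $4^j$ level-$0$ sub-cells, each holding $\theta(\log N) = \delta$ nodes. Running the EZ-AG instance confined to $C$ incurs three message costs exactly as in the proof of Theorem~\ref{thm1}: the intra-cell request flood, the push phase, and the result dissemination flood. Each node in $C$ broadcasts once in each phase, so together they contribute $O(M_j) = O(4^j \delta)$ messages. It remains to bound the self-repelling random walk inside $C$.

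First I would argue that after the push phase every node in $C$ holds the aggregate of its level-$0$ sub-cell, so the token need only visit at least one node in each of the $4^j$ sub-cells in order to finish aggregating all of $C$. Appealing to the uniformity proposition established above (which I would apply to the subgraph induced on $C$, itself geo-dense), the walk covers these sub-cells uniformly, so the double-dixie-cup / coupon-collector bound gives $O(4^j \log(4^j))$ token transfers to hit every sub-cell whp. Since each token transfer costs $O(1)$ messages and $O(1)$ time by the request-suppression mechanism, the walk contributes $O(4^j \log(4^j)) = O(j\, 4^j)$ to both the message and time counts, while the intra-cell flood time $O(2^j)$ is dominated by this.

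The final step, and the only real subtlety, is to collapse $O(j\, 4^j)$ into the claimed $O(4^j \delta)$. Here I would use that the level index is bounded by the height of the hierarchy, $j \le P = \log_4(N/\delta)$, together with $\delta = \theta(\log N)$, whence $P = \theta(\log N) = \theta(\delta)$ and therefore $j = O(\delta)$. This turns the random-walk term into $O(j\, 4^j) = O(\delta\, 4^j)$, matching the flood and push terms, so the total over all phases is $O(4^j \delta)$ messages and, by the same accounting, $O(4^j \delta)$ time. I expect the main obstacle to be justifying that the uniformity proposition, which was established for the whole network, transfers to the induced subgraph of a single cell: this requires confirming that $C$'s subgraph retains the geo-dense property (each of its level-$0$ sub-cells still carries $\theta(\log N)$ nodes whp) and that the token, once restricted to $C$, behaves like an independent self-repelling walk to which the coupon-collector uniformity argument applies.
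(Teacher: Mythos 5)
Your proposal is correct and follows essentially the same route as the paper, whose entire proof is the one-line observation that a level-$j$ cell contains $\theta(4^j\delta)$ nodes whp and therefore Theorem~\ref{thm1} applies directly. You simply unfold that citation — re-running the flood/push/coupon-collector accounting inside the cell and closing the gap with $j \le P = O(\log N) = O(\delta)$ — which is a faithful (and somewhat more careful) rendering of the same argument, including the honest caveat about transferring the uniformity proposition to the induced subgraph, which the paper does not address.
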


\begin{proof}
Note that each cell at level $j$ contains $\theta(4^j \delta)$ nodes whp. Therefore, using Theorem~\ref{thm1}, EZ-AG only requires $O(4^j \delta)$ time and messages to compute aggregate within the cell.
\end{proof}

We note from the above theorem that aggregates at level $0$ can be published every $O(\delta)$ time, aggregates at level $1$ can be published every $O(4 \delta)$ time and so on. Thus, aggregates for cells at smaller levels can be published exponentially faster than those for larger cells. Thus, if the tokens repeatedly compute an aggregate and disseminate within their respective cells, EZ-AG can generate hierarchical aggregates that are distance-sensitive in refresh rate, where aggregates of nearby regions are supplied at a faster rate than farther neighborhoods.

\begin{theorem}
Hierarchical EZ-AG can compute an ODI aggregate for all cells at all levels using $O(N log N)$ messages.
\end{theorem}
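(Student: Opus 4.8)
The plan is to bound the total message count by summing the cost of running a separate EZ-AG instance in every cell, across every level of the hierarchy, and to exploit the fact that the per-level cost turns out to be independent of the level, so that the only logarithmic factor comes from the number of levels. First I would count the cells at a fixed level $j$: since the network has $N$ nodes and each level-$j$ cell contains $\theta(4^j\delta)$ nodes whp, there are $\theta(N/(4^j\delta))$ cells at level $j$.

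Next, by the preceding theorem (equivalently, by applying Theorem~\ref{thm1} to each cell viewed as a subnetwork of $\theta(4^j\delta)$ nodes), a single instance of EZ-AG in a level-$j$ cell costs $O(4^j\delta)$ messages. Multiplying the number of cells by the per-cell cost, the total message cost incurred at level $j$ is
\[
\theta\!\left(\frac{N}{4^j\delta}\right)\cdot O(4^j\delta) = O(N),
\]
so the growing per-cell cost and the shrinking cell count cancel exactly, leaving a cost of $O(N)$ that is the same at every level.

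Finally I would sum over all levels. Since $P = \log_4(N/\delta)$ and $\delta = \theta(\log N)$, there are $P+1 = O(\log N)$ levels, and therefore the grand total is $O(N)\cdot O(\log N) = O(N\log N)$, as claimed. Because a node belongs to exactly one cell at each level, its message contribution is counted once per level, so the sum does not double count and the bound is tight in this accounting.

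The main obstacle is not the algebra but justifying that the per-cell invocation of Theorem~\ref{thm1} is legitimate uniformly. Each cell must itself behave like a connected, geo-dense geometric graph with a uniform node distribution so that the $O(\cdot)$ bound holds simultaneously across all $\theta(N/(4^j\delta))$ cells and all $O(\log N)$ levels. I would argue this follows from the deployment assumptions (uniform placement and $R^2 > 2c\log(N)/\rho$) restricted to each cell, observing that cells at higher levels contain only more nodes and hence remain geo-dense, and that the hidden constant in the per-cell bound can be taken uniform over levels. With this uniformity in hand, the telescoping argument above delivers the $O(N\log N)$ bound.
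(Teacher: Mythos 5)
Your proof is correct and follows essentially the same route as the paper: count $\theta(N/(4^j\delta))$ cells at level $j$, charge $O(4^j\delta)$ messages per cell via the preceding theorem, observe the per-level cost is $O(N)$, and sum over the $P+1 = O(\log N)$ levels. Your added remark about justifying the per-cell application of Theorem~\ref{thm1} uniformly across cells and levels is a reasonable point of care that the paper's proof leaves implicit, but it does not change the argument.
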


\begin{proof}
Note that a cell at level $0$ contains $\delta$ nodes and there are $N / \delta$ such cells. The aggregate for cells at level $0$ can be computed using $O(\delta)$ messages. 

In general, there are $N / 4^j \delta$ cells at level $j$ and aggregates for these cells can be computed using $O(4^j \delta)$ messages. Summing up from levels $0$ to $P$, the total aggregation message cost ($M$) for hierarchical EZ-AG can be computed as follows.

\begin{eqnarray*}
M &=& \sum_{j=0}^{P} 4^j \delta \frac{N}{4^j \delta} \\
  &=& \sum_{j=0}^{P} N \\
  &=& O(N log N) \qed
\end{eqnarray*}  
\end{proof}

Thus, hierarchical EZ-AG can compute an ODI aggregate for all cells at all levels using $O(N log N)$ messages.

\section{Performance evaluation}
\label{sec:eval}

In this section, we systematically evaluate the performance of EZ-AG using simulations in ns-3. We set up MANETs ranging from 100 to 4000 nodes using the network model described in Section~\ref{sec:model}. Nodes are deployed uniformly in the network with a  deployment area and communication range such that $R^2 = 4log(N)/\rho$. Thus, the network is geo-dense with $c=2$, i.e., each node has on average $2log(N)$ neighbors whp and the network is connected whp. We test such networks in our simulations with the following mobility models: 2-d random walk, random waypoint and Gauss-Markov (described in Section~\ref{sec:model}). The average node speeds range from $3$ to $21$ m/s. We also consider static networks as a special case.

First, we analyze the convergence characteristics of the push-assisted self-repelling random walk phase in EZ-AG and compare that with self-repelling random walks and plain random walks. Next, we analyze the total messages and time taken by EZ-AG. Finally, we compare EZ-AG with a prototype tree based protocol and with gossip based techniques.

\subsection{Coverage uniformity}
\label{sec:uniformity}

First, in Fig.~\ref{fig:1a}, Fig.~\ref{fig:1b} and Fig.~\ref{fig:1c},we show the number of times each node is visited when the self-repelling random walk has finished visiting $50\%$ of the nodes, $75\%$ of the nodes and $85\%$ of the nodes. We observe that most of the nodes are just visited once and this result holds even at $1000$ nodes. These graphs {\em highlight the uniformity with which nodes are visited as self-repelling random walks progress}. The self-repelling random walk is not stuck in regions of already visited nodes - instead, it spreads towards unvisited areas. Otherwise, one would have observed more duplicate visits to the previously visited nodes.In Fig.~\ref{fig:1d}, we analyze the distribution of number of visits at each node when $100\%$ coverage is attained. Here, we see that most nodes are visited $2$ or $3$ times and the distribution falls off rapidly after that. 

We then compare the uniformity in coverage with that of pure random walks. In Fig.~\ref{fig:4a1}, we plot the number of visits to each node until all nodes are visited at least once for a $500$ node network. In comparison with Fig.~\ref{fig:4b1}, we observe that the tail of the distribution is much longer and the number of duplicate visits is much higher for pure random walks.

\subsection{Convergence characteristics}
\label{sec:conv}

\begin{figure*}[htbp]
  \begin{center}
    \mbox{
      \subfigure[Impact of mobility models on exploration overhead of EZ-AG] {\scalebox{0.3}{\includegraphics[width=\textwidth]{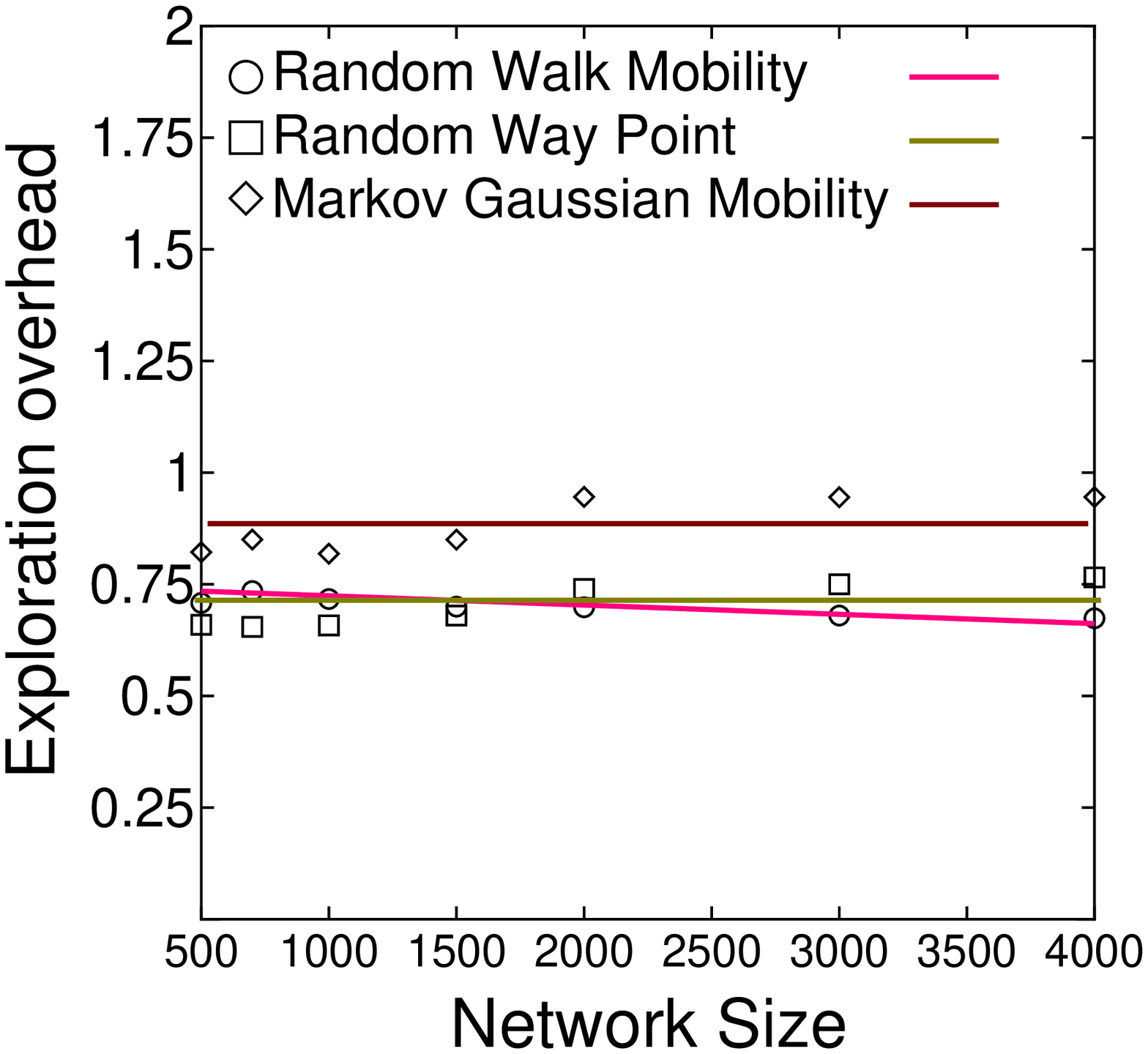}} \label{fig:3a}} \quad
      \subfigure[Impact of node speed on exploration overhead of EZ-AG)] {\scalebox{0.3}{\includegraphics[width=\textwidth]{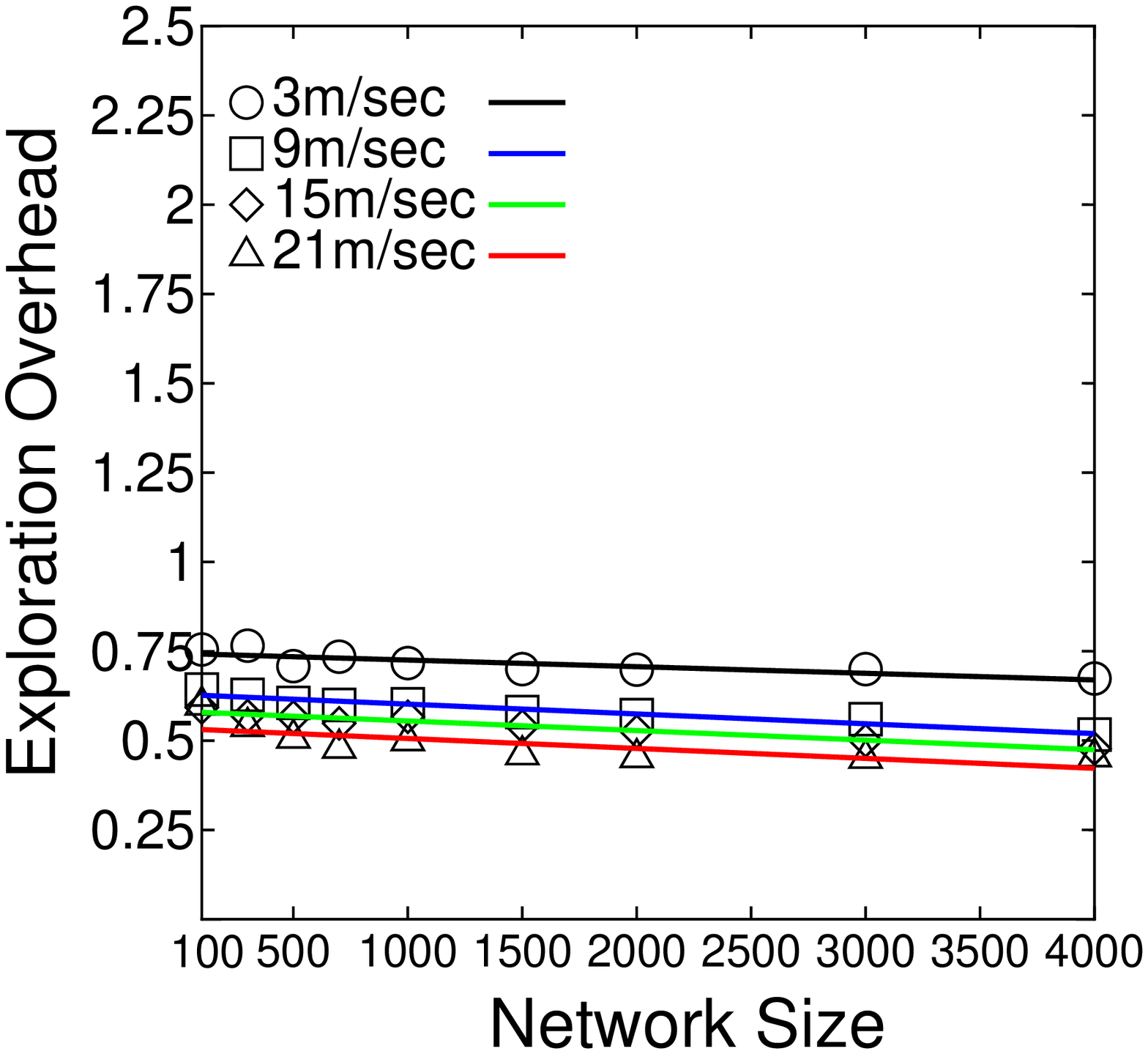}} \label{fig:3b}} 
      \subfigure[Exploration overhead as a function of node speed ] {\scalebox{0.3}{\includegraphics[width=\textwidth]{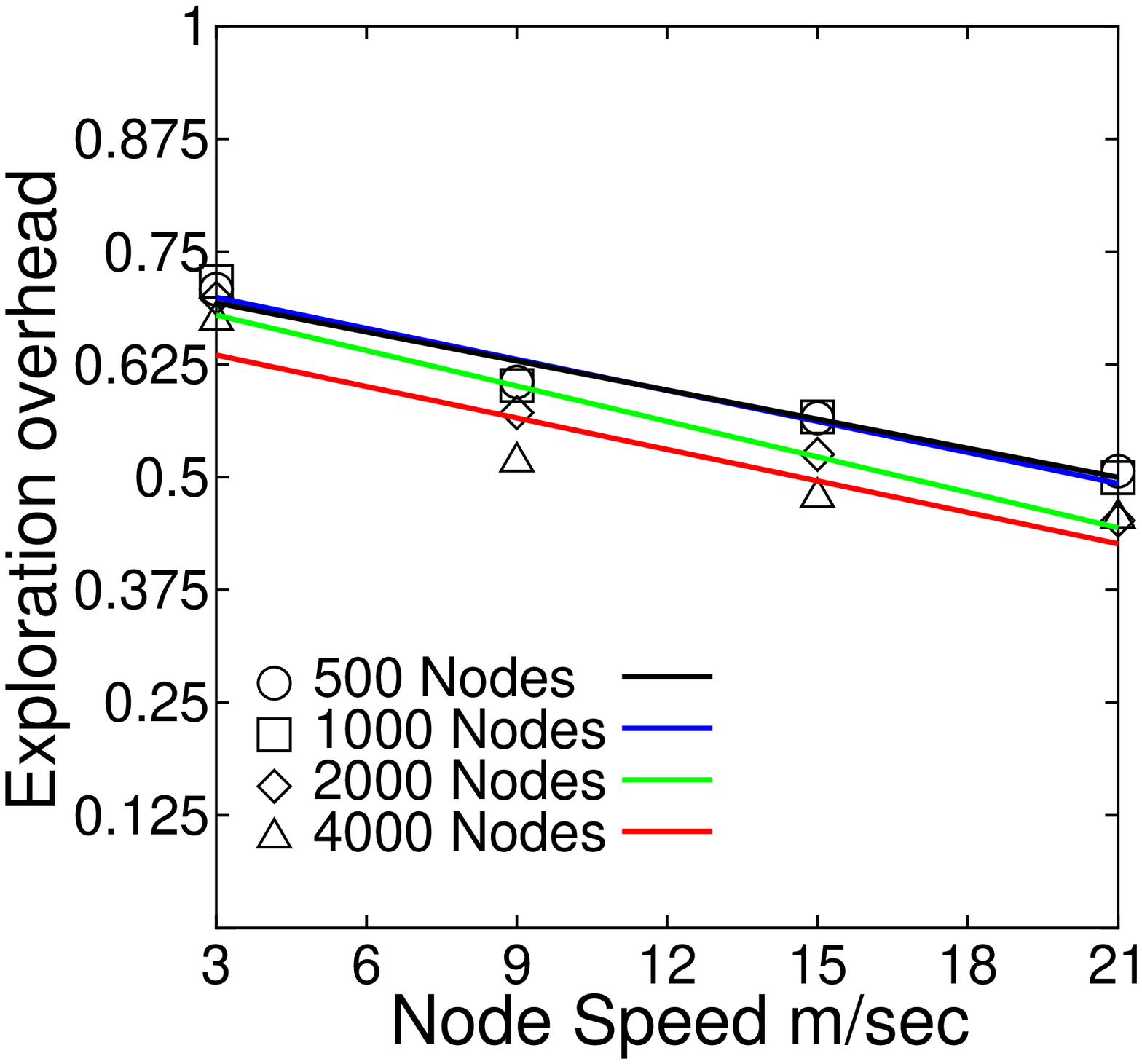}} \label{fig:4}}         
      } 
    \vspace{-1mm}  
    \caption{{Analysis of exploration overhead of EZ-AG for different mobility models and network speeds}}
       \label{fig:3}
  \end{center}
\vspace*{-4mm}
\end{figure*}

Next, in Fig.~\ref{fig:2a}, we show the exploration overhead of self-repelling random walk during different stages of coverage. As seen in Fig.~\ref{fig:2a}, until about $85\%$ coverage, self-repelling random walks have an exploration overhead of around $1$ (irrespective of network size) but then the overhead starts to rise sharply. This is because, until this point self-repelling enables a token to find an unvisited node directly and there are very few wasted explorations. A slowdown for self-repelling random walk is noticed after this point. As a result, the exploration overhead at $100\%$ coverage is close to $2$ and moreover it increases with network size. This is what we aim to address using EZ-AG.


The exploration overhead at $100\%$ coverage  is shown in Fig.~\ref{fig:2b} for self-repelling random walks and EZ-AG (i.e., push-assisted self-repelling random walks). As seen in the figure, the exploration overhead for self-repelling random walks grows with a logarithmic trend due to the wasted explorations towards the tail end of the random walk phase when most of the nodes are already visited. The push assisted self-repelling random walks remove these wasted explorations and as a result the median exploration overhead stays constant at all network sizes and is actually less than $1$ (approximately $0.75$ as seen in Fig.~\ref{fig:2b}). 

\subsection{Variance and terminating condition}
\label{sec:term}

In Fig.~\ref{fig:var}, we show the variation in exploration overhead for EZ-AG over $50$ different trials at different network sizes. We observe that irrespective of network size, for $97.5\%$ of the trials, the exploration overhead is smaller than $1$. We can use this to design a terminating condition for the random walk phase of the protocol. For example, we could terminate the random walk phase after exactly $N$ steps, and then start the dissemination of the aggregate.

\begin{figure}[t]
  \begin{center}
    \includegraphics[width=.55\textwidth]{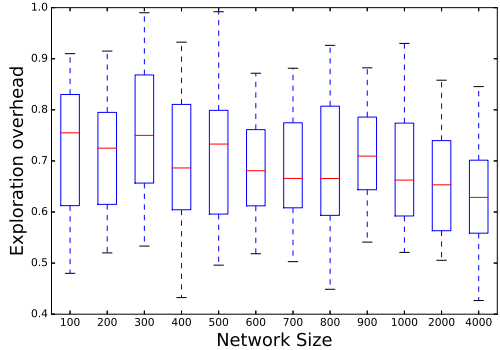}
    \caption{Variance in exploration overhead}
    \label{fig:var}
  \end{center}
  \vspace{-3mm}
\end{figure}

\subsection{Impact of mobility and speed}
\label{sec:speed}

In Fig.~\ref{fig:3a} and Fig.~\ref{fig:3b}, we evaluate the impact of mobility model and network speed on the exploration overhead of push assisted self-repelling random walks. We observe that even though random waypoint and Gauss Markov models do not preserve the uniform distribution of node locations, the exploration overhead exhibits a similar trend. As seen in Table~\ref{tab1}, the network structure is rapidly changing at the speeds chosen for evaluation. Despite this, in Fig.~\ref{fig:3b}, we observe that the exploration overhead actually starts decreasing with node speed (this is shown more clearly in Fig.~\ref{fig:4} for networks with different sizes).

\begin{figure}[ht]
  \begin{center}
    \includegraphics[width=.55\textwidth]{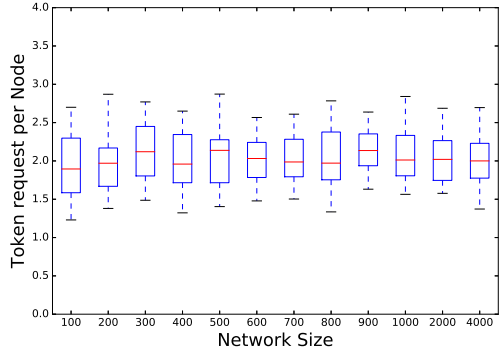}
    \caption{Number of token requests generated per token transfer}
    \label{fig:5c}
  \end{center}
  \vspace{-4mm}
\end{figure}

\subsection{Messages and Time}
\label{sec:msgtime}

\begin{figure*}[htbp]
  \begin{center}
    \mbox{
      \subfigure[Total messages as a function of network size] {\scalebox{0.4}{\includegraphics[width=\textwidth]{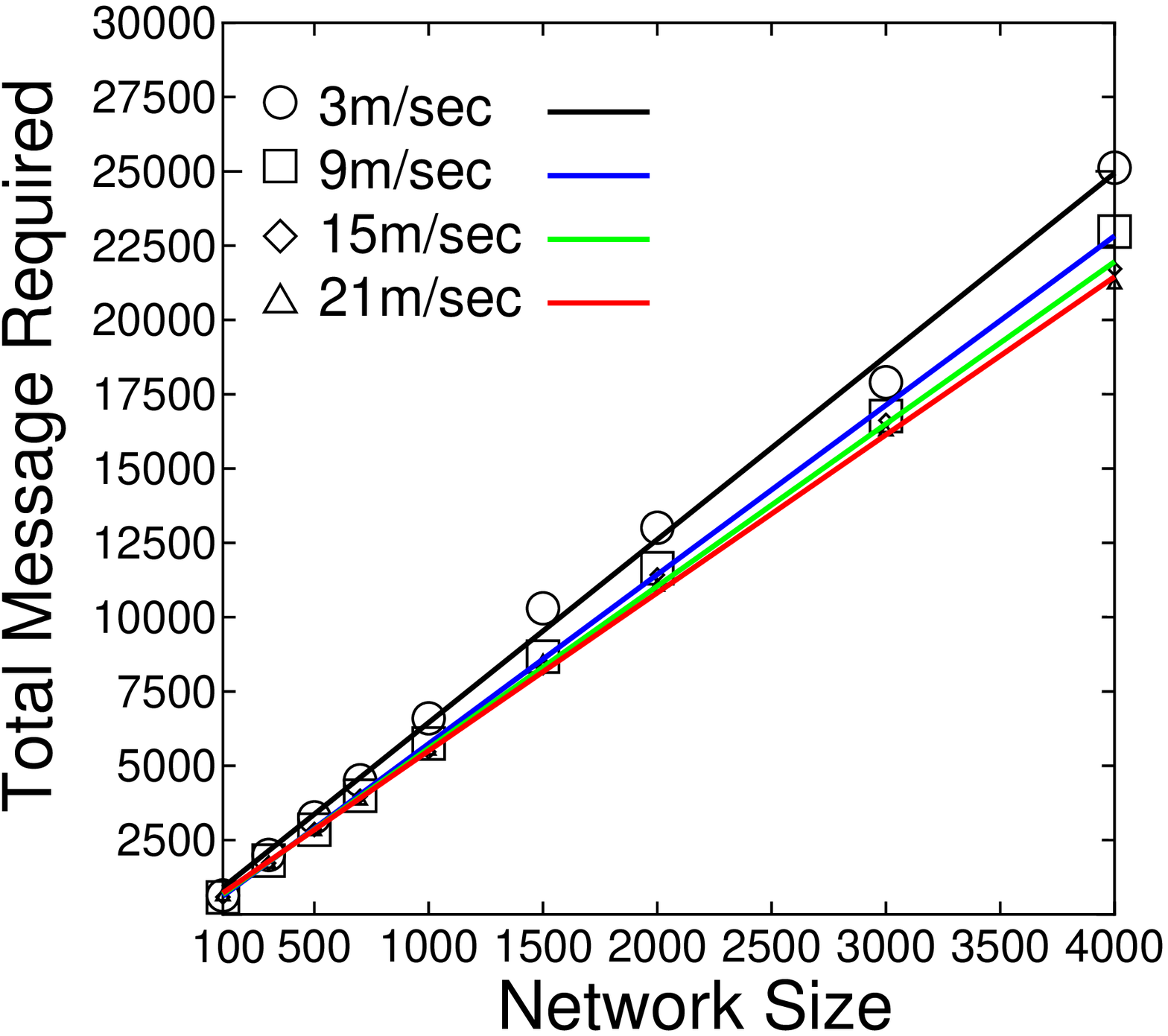}} \label{fig:5a}} \quad
      \subfigure[Aggregation time as a function of network size] {\scalebox{0.39}{\includegraphics[width=\textwidth]{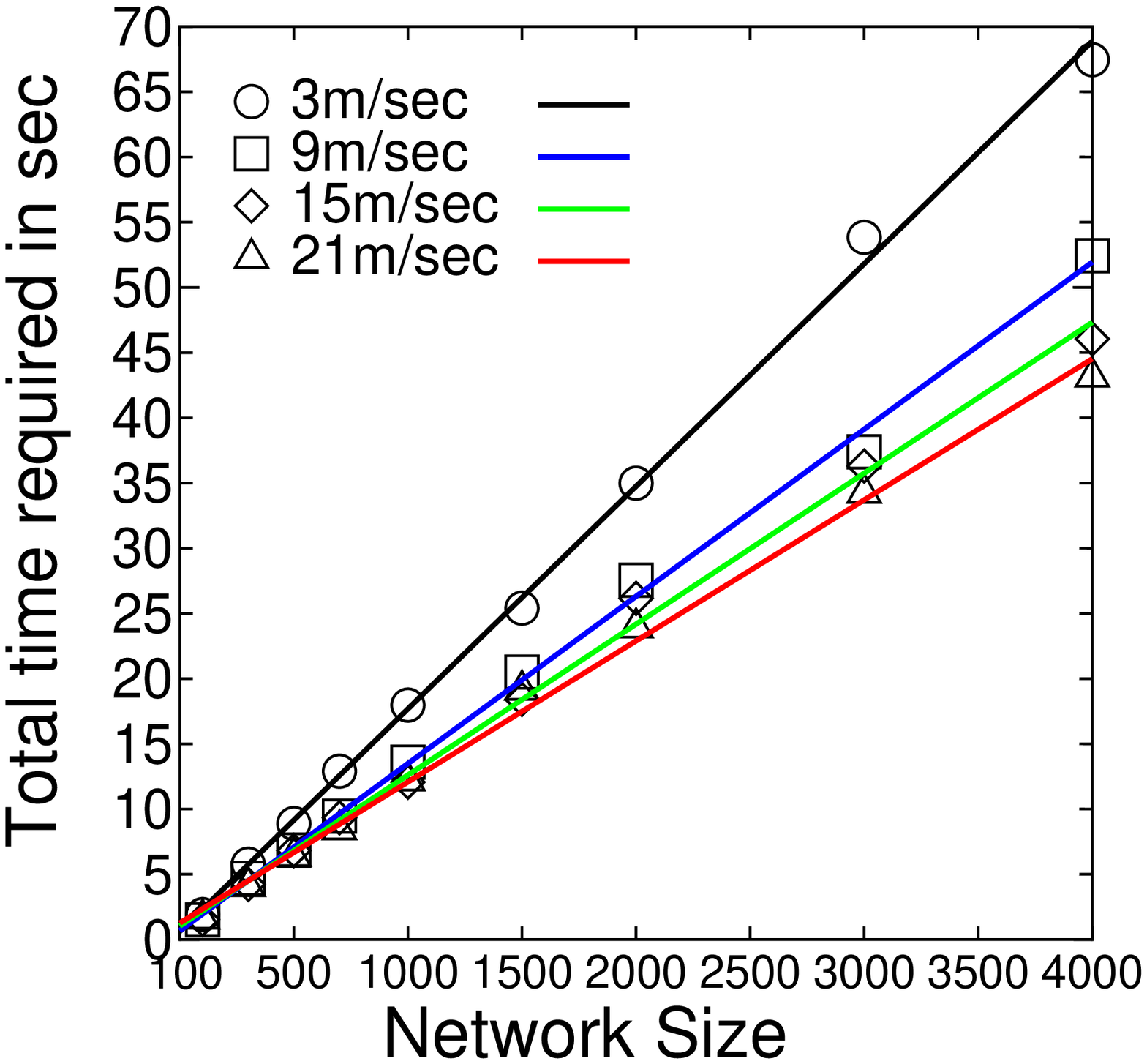}} \label{fig:5b}}         
      } 
    \vspace{-1mm}  
    \caption{{Analysis of time and messages for EZ-AG}}
       \label{fig:5}
  \end{center}
\vspace*{-4mm}
\end{figure*}

In Fig.~\ref{fig:5a} and Fig.~\ref{fig:5b}, we show the total number of messages and the total aggregation time as a function of network size for the aggregation protocol based on push-assisted self-repelling random walks. The total number of messages required to complete the data aggregation includes the push messages, the messages involved in the self-repelling random walk phase and the messages involved in disseminating the result to all the nodes using a flood. Note that, each token transfer step itself consists of announcement, token request and token transfer messages. These are all included in Fig.~\ref{fig:5a} which shows that the messages grow linearly with network size. 

An interesting aspect of the token transfer procedure is the number of requests generated for a token during each iteration. Note that the average number of neighbors increases as $\theta(log N)$ when the network size increases.  However, from Fig.~\ref{fig:5c}, the number of token requests per transfer is seen to be independent of the number of neighbors. From the box plot of Fig.~\ref{fig:5c}, we observe that the average number of token requests in each trial is in the range of $1-3$. This is because nodes that are visited less often send a request earlier than those that are visited more times. And, if a node hears a request from a node that has been visited less often than itself, it suppresses its request. Thus, irrespective of the neighborhood density, the number of token requests per node stay constant.

As seen in Fig.~\ref{fig:5b}, the total aggregation time also exhibits a linear trend. Note that the measurement of time is quite implementation specific and incorporates messaging latency in the wireless network. For instance, in our implementation each transaction (i.e., each iteration of token announcement, token requests and token passing) took on average $25ms$. But this number could be much smaller using methods such as \cite{collabcounting} that use collaborative communication for estimating neighborhood sizes that satisfy given predicates.

\subsection{Comparison with structured tree based protocol}
\label{sec:tree}

In this section, we compare the performance of our protocol with a structured approach for one-shot duplicate insensitive data aggregation that involves maintaining network structures such as spanning trees. For our comparison, we use a prototype tree-based protocol that we describe briefly. The idea is very similar to other tree-based aggregation protocols developed for static sensor networks \cite{tag, ctp}, but the key difference is that the tree is periodically refreshed to handle mobility as described below.

\begin{figure*}[htbp]
  \begin{center}
    \mbox{
      \subfigure[Total messages as a function of node speed] {\scalebox{0.4}{\includegraphics[width=\textwidth]{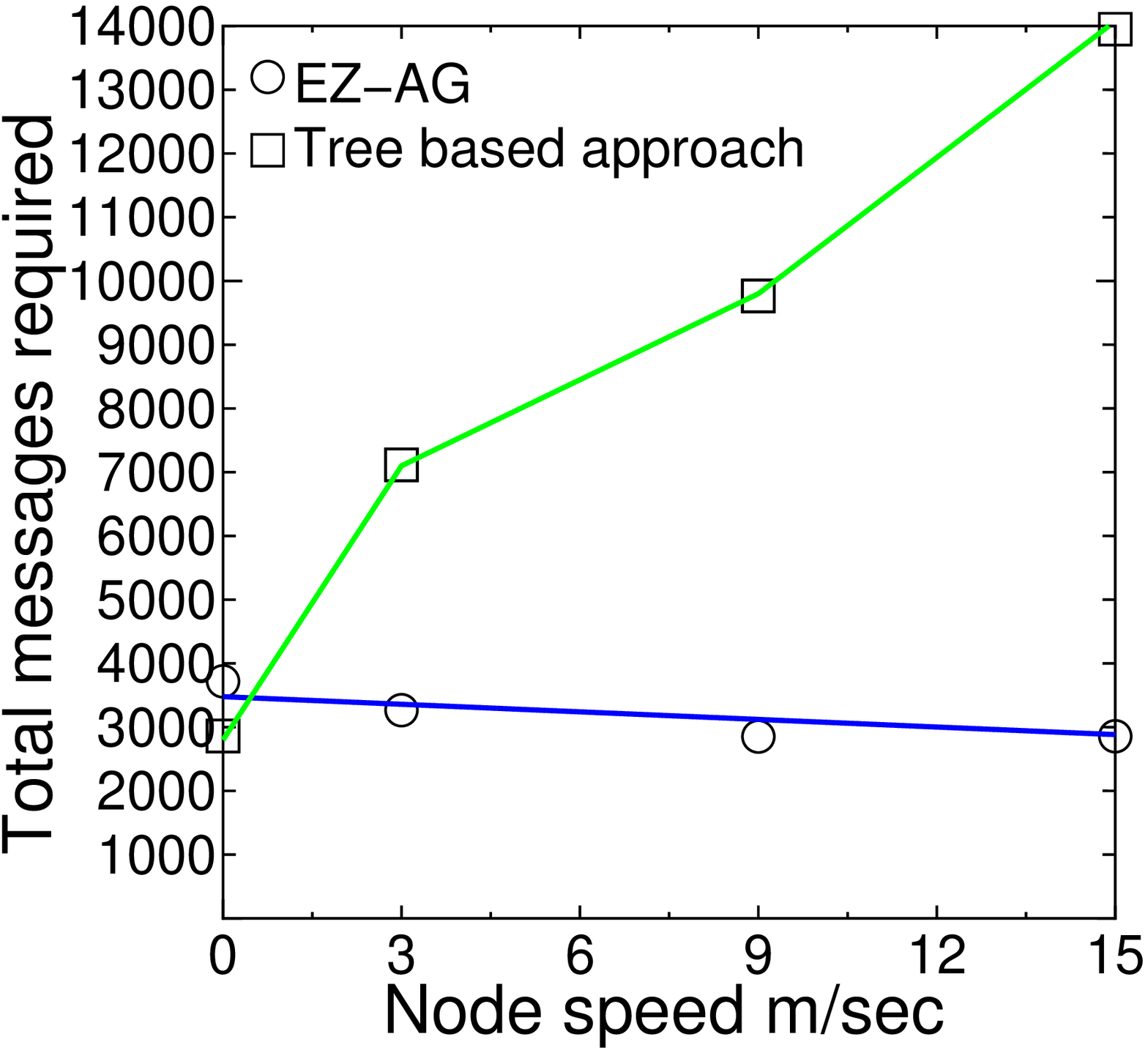}} \label{fig:6a}} \quad
      \subfigure[Aggregation time as a function of node speed] {\scalebox{0.39}{\includegraphics[width=\textwidth]{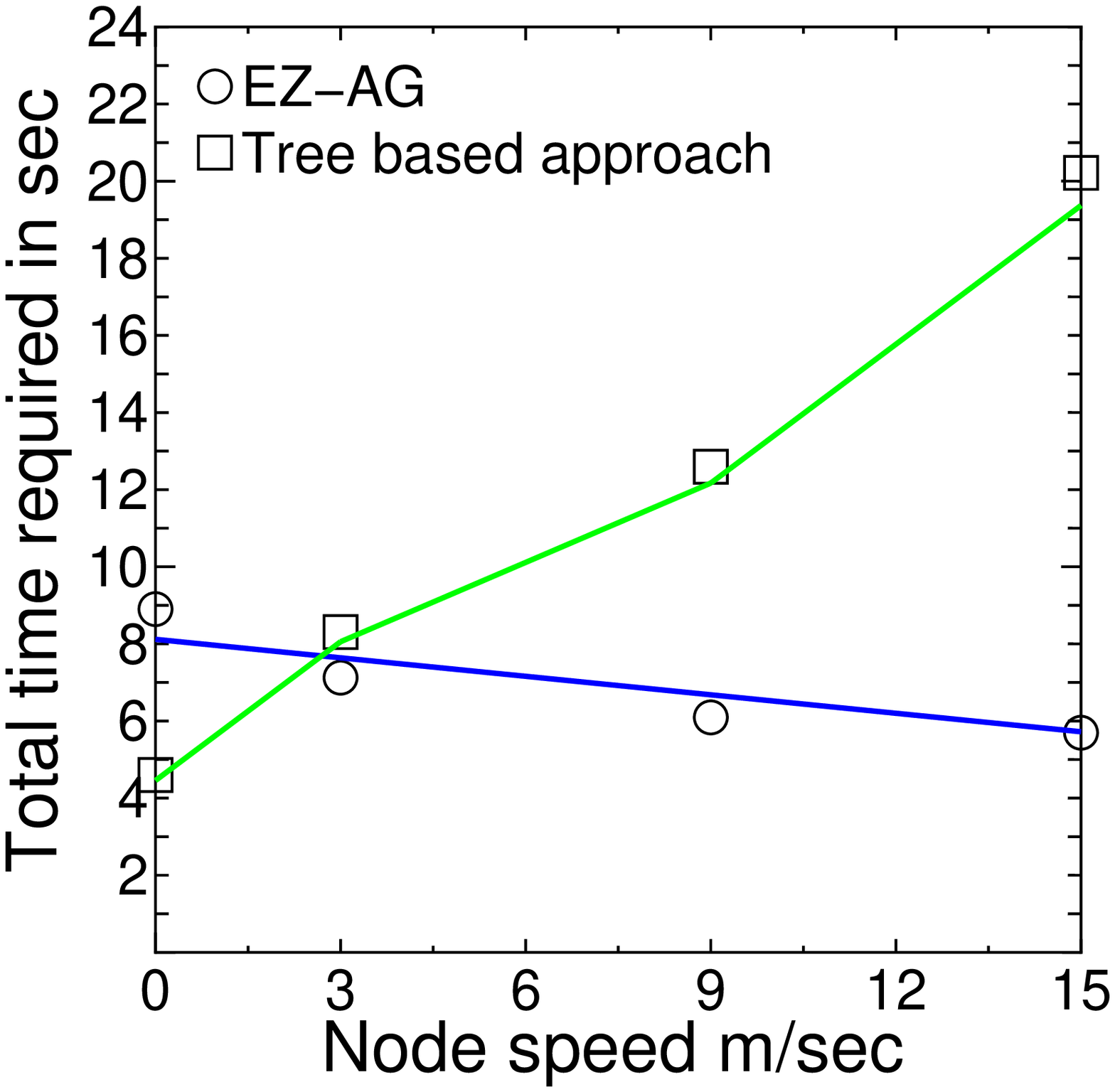}} \label{fig:6b}}         
      } 
    \vspace{-1mm}  
    \caption{{Comparison of time and messages for EZ-AG and tree-based protocol at different node speeds}}
       \label{fig:6}
  \end{center}
\vspace*{-4mm}
\end{figure*}

\begin{figure}[t]
  \begin{center}
    \includegraphics[width=.5\textwidth]{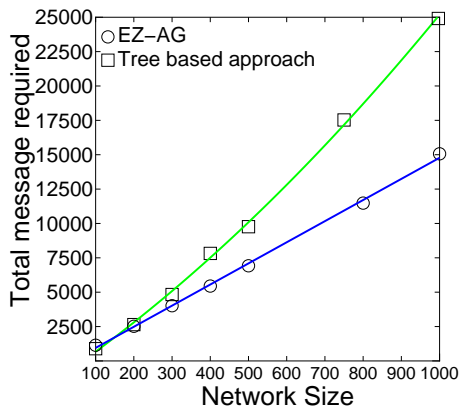}
    \caption{Total messages as a function of network size for EZ-AG and tree based protocol (node speed = 9 m/s) }
    \label{fig:7}
  \end{center}
  \vspace{-3mm}
\end{figure}

The initiating node maintains a tree structure rooted at itself by flooding a request message in the network. Each node maintains a {\em parent} variable. When a node hears a flood message for the first time, it marks the sending node as its {\em parent}. It then schedules a data transmission for its parent at a random time chosen within the next $25$ ms. The message is successively forwarded through the tree structure until it reaches the root. During this process, a node also opportunistically aggregates multiple messages in its transmission queue before forwarding data to its parent. A message could be lost because a node's parent has moved away or due to collisions. To handle message losses, a node repeats its data transmission to its parent until an acknowledgement is received from its parent. While this basic protocol is sufficient for a static network, the network structure is constantly evolving in a mobile network. Hence, the initiating node periodically refreshes the tree by broadcasting a new request every $2$ seconds (with a monotonically increasing sequence number to allow nodes to reset their parents). The refreshing of the tree is stopped when data from all nodes has been received at the initiating node.

In Fig.~\ref{fig:6a}, we compare the total messages required for the tree-based protocol and the random walk based protocol at different node speeds. As seen in this figure, for static networks the tree based protocol is more efficient. However as the mobility increases, the random walk based protocol starts increasing in efficiency. In Fig.~\ref{fig:6b} we compare the total aggregation time which also exhibits a similar trend.

\begin{figure}[t]
  \begin{center}
    \includegraphics[width=.6\textwidth]{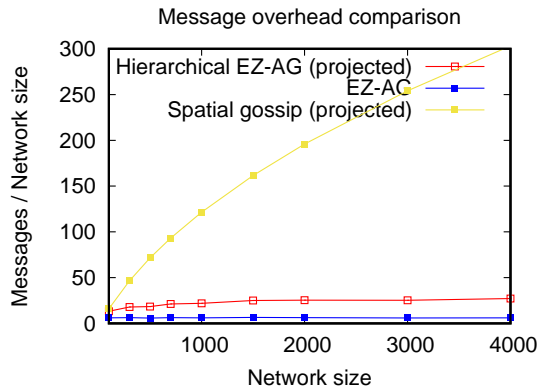}
    \caption{Projected number of messages per node for hierarchical EZ-AG and multi-resolution spatial gossip }
    \label{fig:hierarchy_compare}
  \end{center}
  \vspace{-3mm}
\end{figure}

In Fig.~\ref{fig:7} we compare the total number messages as a function of network size at an average speed of $15$ m/s. Here we observe that the self-repelling random walk based protocol exhibits a linear trend while the tree based protocol exhibits a super-linear trend. This is due to the potentially large number of re-transmissions experienced by the tree-based protocol in a mobile network. This graph also shows that EZ-AG is far more scalable with network size under mobility than structure-based techniques for data aggregation.

\subsection{Comparison with gossip techniques}

In \cite{fisheye_gossip} and \cite{cascade_ipsn07}, a spatial gossip technique is described where each node chooses another node in the network (not just neighbors) at random and gossips its state. When this is repeated $O(log^{1+\epsilon}(N))$ times (where $\epsilon > 1$), all nodes in the network learn about the aggregate state. Note that this scheme requires $O(N.polylog(N))$ messages. EZ-AG requires only $O(N)$ messages. 

In \cite{cascade_ipsn07}, an extension to the spatial gossip technique is described which provides a multi-resolution synopsis of the network state at each node. The technique described in \cite{cascade_ipsn07} requires $O(Nlog^{5.4}(N))$ messages. The hierarchical extension of EZ-AG only requires $O(NlogN)$ messages. The difference is actually quite significant at larger network sizes as seen in Fig.~\ref{fig:hierarchy_compare}, where we show the analytically projected difference in messages transmitted per node for hierarchical EZ-AG and multi-resolution spatial gossip \cite{cascade_ipsn07}. 

\section{Conclusions}
\label{sec:conclusion}

In this paper, we have presented a scalable, robust and lightweight protocol for duplicate insensitive data aggregation in MANETs that exploits the simplicity and efficiency of self-repelling random walks. We showed that by complementing self-repelling random walks with a single step push phase, our protocol can achieve data aggregation in $O(N)$ time and messages. In terms of message overhead, our protocol outperforms existing structure free gossip protocols by a factor of $log(N)$. We quantified the performance of our protocol using ns-3 simulations under different network sizes and mobility models. We also showed that our protocol outperforms structure based protocols in mobile networks and the improvement gets increasingly significant as average node speed increases. 

Importantly, EZ-AG is lightweight in terms of resource requirements and makes rather minimal assumptions of the underlying network.  In particular, it does not assume knowledge of node addresses or locations, require a neighborhood discovery service or network topology information, or depend upon any particular routing or transport protocols such as TCP/IP. 

We also described a hierarchical extension to EZ-AG that provides multi-resolution aggregates of the network state to each node. It outperforms existing technique by a factor of $O(log^{4.4}N)$ in terms of number of messages.

\balance
\bibliographystyle{unsrt}
\bibliography{vinod}


\end{document}